\title{Similarity of closed polygonal curves in Frechet metric}
\author{Schlesinger M.I., Vodolazskiy E.V., Yakovenko V.M.}
\begin{document}

\markboth{Schlesinger M.I., Vodolazskiy E.V., Yakovenko V.M.}{Similarity of closed polygonal curves in Frechet metric}

\catchline

\author{ Schlesinger M.I.}

\address{
schles@irtc.org.ua
}

\author{Vodolazskiy E.V.}

\address{
    waterlaz@gmail.com
}

\author{Yakovenko V.M.}

\address{
    asacynloki@gmail.com
}

\address{
International Research and Training Centre \\
of Information Technologies and Systems \\
National Academy of Science of Ukraine \\
Cybernetica Centre, \\
prospect Academica Glushkova, 40, \\
03680, Kiev-680, GSP, Ukraine. 
}

\maketitle

\pub{Received (received date)}{Revised (revised date)}
{Communicated by (Name)}

\begin{abstract}
The article analyzes similarity of closed polygonal curves in Frechet metric, 
which is stronger than the well-known Hausdorff metric and therefore is more appropriate in some applications.
An algorithm that determines whether the Frechet distance between two closed polygonal curves with $m$ and $n$ vertices is less than a given number~$\varepsilon$ is described.
The described algorithm takes $O(m n)$ time whereas the previously known algorithms take $O(m n\log (m n))$ time.  

\keywords{computational geometry, Frechet distance, computational complexity.}
\end{abstract}
\section{Introduction}
Frechet metric is a tool for 
cyclic process analysis and image processing \cite{ICDAR.2007.121} as well as the well-known Hausdorff metric \cite{Chen02}\cite{rockafellar2011variational}
\cite{ismm2011}. 
The Frechet metric is stronger than the Hausdorff metric and therefore is more appropriate in some applications \cite{frechet}. 
The Frechet metric for closed polygonal curves has been studied in a paper \cite{frechet} by Alt and Godau.
They propose an algorithm that determines whether the distance between two closed polygonal curves with $m$ and $n$ vertices is greater than a given number~$\varepsilon$.
The complexity of the algorithm is $O(m n \log (m n))$ on a random access machine that performs arithmetical operations 
and computes square roots in constant time. 
Our paper develops the ideas of the original paper \cite{frechet} and solves the same problem in $O(m n)$ time.
Sections \ref{Definitions} and \ref{Diagrams} describe the concepts that are common for both papers. 
The end of Section \ref{Diagrams} shows the difference between the known and the proposed approaches.
Sections \ref{Achievability}-\ref{ResultSection} explicate the proposed approach.

\section{\label{Definitions} Problem definition.}
Let $\mathbb{R}^k$ be a $k$-dimensional linear space with a metric $\text{   }d:\mathbb{R}^k\times \mathbb{R}^k \rightarrow \mathbb{R}$, 
where  $d(x,y)=\sqrt{(x-y)^2}$ is the distance between points $x, y \in \mathbb{R}^k$.

\begin{definition}
A closed polygonal curve $X$ with $m$ vertices is a sequence $\bar{x}=(x_0, x_1, \cdots, x_m=x_0)$, $x_i \in \mathbb{R}^k$, and a function
$f_X: \{ t \in \mathbb{R}|0 \le t \le m \} \rightarrow \mathbb{R}^k$ such that 
$f_X(i+\alpha)=(1-\alpha) x_i + \alpha x_{i+1}$
for $i \in \{0, 1, \dots , m-1\}$ and $0 \le \alpha \le 1$.
\end{definition}
\begin{definition}
A cyclic shift of an interval $\{ t|0 \le t \le m \}$ by a value $\tau$, $0 \le \tau \le m $, is a function
$s:\{ t|0 \le t \le m \} \rightarrow \{ t|0 \le t \le m \}$ that depends on a parameter $\tau$ such that 
$s(t;\tau) =t+\tau$ for $t+\tau \le m$ and $s(t;\tau) =t+\tau - m$ for $t+\tau > m$. 
\end{definition}
For any number $m$ denote $W_m$ a set of all monotonically non-decreasing continuous functions
$w:\{ t|0 \le t \le 1 \}\rightarrow\{ t|0 \le t \le m \}$ such that $w(0)=0,w(1)=m$. 
\begin{definition}
A function $\varphi:\{t|0 \le t \le 1\}\rightarrow \mathbb{R}^k$ is called a monotonic traverse of a closed $m$-gonal curve $X$ 
if there is a function $w \in W_m$ and a number $\tau$, $0 \le \tau \le m$, such that $\varphi(t)=f_X(s(w(t);\tau))$ for all $t$, $0 \le t \le 1$.
\end{definition}
For given closed polygonal curves $X$ and $Y$ denote $\Phi_X$ and $\Phi_Y$ sets of their monotonic traverses.
\begin{definition}
Frechet distance between closed polygonal curves $X$ and $Y$ is a number
\begin{equation} \nonumber
\delta(X,Y) = \min_{\varphi_X \in \Phi_X} \min_ {\varphi_Y \in \Phi_Y } \max_{0\le t \le 1}d(\varphi_X(t) , \varphi_Y(t)).
\end{equation}
\end{definition}
The article solves the problem of developing an algorithm that determines whether $\delta(X,Y) \le \varepsilon$
for given closed polygonal curves $X$ and $Y$ and a number~$\varepsilon$.

\section{\label{Diagrams}The free space diagram.} 
The problem's analysis is largely based on its representation in a form of a free space diagram introduced in a paper \cite{frechet}. 
For two numbers $m$ and $n$ let us define a rectangle $\widetilde{D} = \{(u,v) \in \mathbb{R}^2|0 \le u \le m, 0 \le v \le n \}$, 
$u$ and $v$ being horizontal and vertical coordinates of a point $(u,v)$.
For two closed polygonal curves $X$ and $Y$ with $m$ and $n$ vertices and a number $\varepsilon$ a subset
$\widetilde{D}_\varepsilon= \{(u,v) \in  \widetilde{D}| d(f_X(u),f_Y(v)) \le \varepsilon\}$ is defined. 
Let us also define a rectangle
$D = \widetilde{D} \cup \{(u+m,v)|(u,v) \in \widetilde{D}\}$ with its subset 
$D_\varepsilon = \widetilde{D}_\varepsilon \cup \{(u+m,v)|(u,v) \in \widetilde{D}_\varepsilon\}$ called a free space.
Denote $T$, $B$, $L$ and $R$
the top, bottom, left and right sides of the rectangle $D$.
Two closed polygonal curves $X$ and $Y$ are shown on Figure \ref{Fig0} and the corresponding rectangles
$\widetilde{D}$ and  $D$ are shown on Figure \ref{Fig1}, the light region of $D$ being the free space $D_\varepsilon$. 
\begin{figure}
  \centering
  \includegraphics*{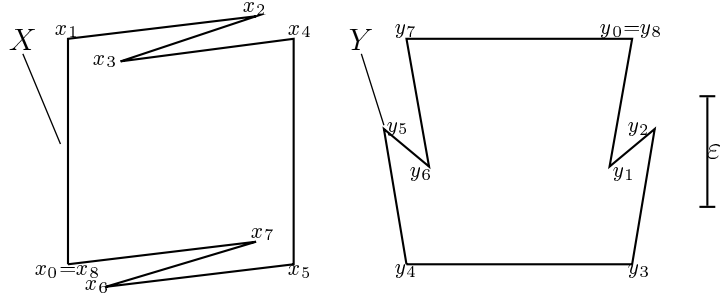}
  \caption{Two closed polygonal curves}
  \label{Fig0}
\end{figure}
\begin{figure}
  \centering
  \includegraphics*[width=\textwidth]{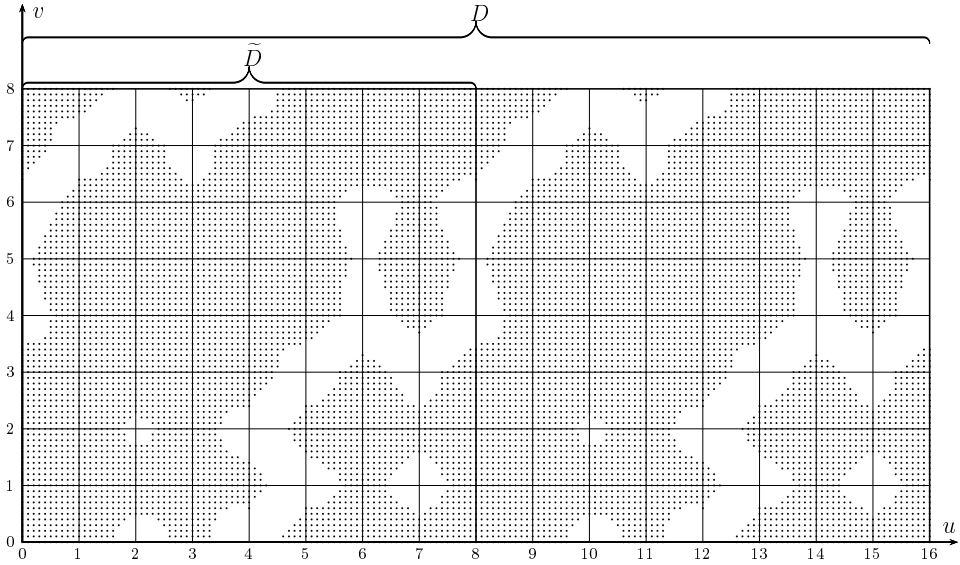}
  \caption{The free space diagram}
  \label{Fig1}
\end{figure}
\begin{definition}
A monotonic path is a connected subset $\gamma \subset D_\varepsilon$ such that $(u~-~u')(v~-~v') \ge 0$ for any two points 
$(u,v) \in \gamma$, $(u',v') \in \gamma$.
\end{definition}
\begin{definition}
Two points $(u,v) \in D$ and $(u',v') \in D$ are mutually reachable if and only if a monotonic path $\gamma$ exists such that $(u,v) \in \gamma$, $(u',v') \in \gamma$.
\end{definition}
Mutual reachability defines a subset of $D_\varepsilon \times D_\varepsilon$ and is a symmetric reflexive binary relation, not necessarily transitive. 
The expression "a point $(u,v) \in D$ is reachable from $(u',v') \in D$" is further used in a sense that $(u,v)$ and $(u',v')$ are mutually reachable.
\begin{definition}
A point $(u,v) \in D_\varepsilon$ is reachable from the bottom if it is reachable from at least one point from $B$;
a point $(u,v) \in D_\varepsilon$ is reachable from the top if it is reachable from at least one point of $T$.
\end{definition}
Reachability from top and bottom are unary relations. Each of these two relations define a subset of $D_\varepsilon$.
Denote $g_\downarrow \subset D_\varepsilon$ a set of points reachable from the bottom and
$g^\uparrow \subset D_\varepsilon$ a set of points reachable from the top.
\begin{figure}
  \centering
  \includegraphics*[width=\textwidth]{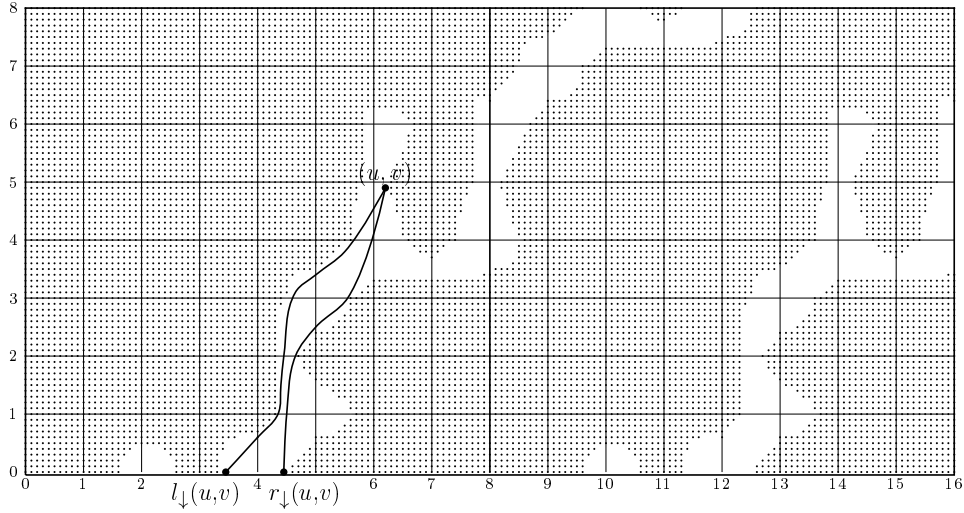}
  \caption{The set of points reachable from the bottom}
  \label{Fig2}
\end{figure}
\begin{figure}
  \centering
  \includegraphics*[width=\textwidth]{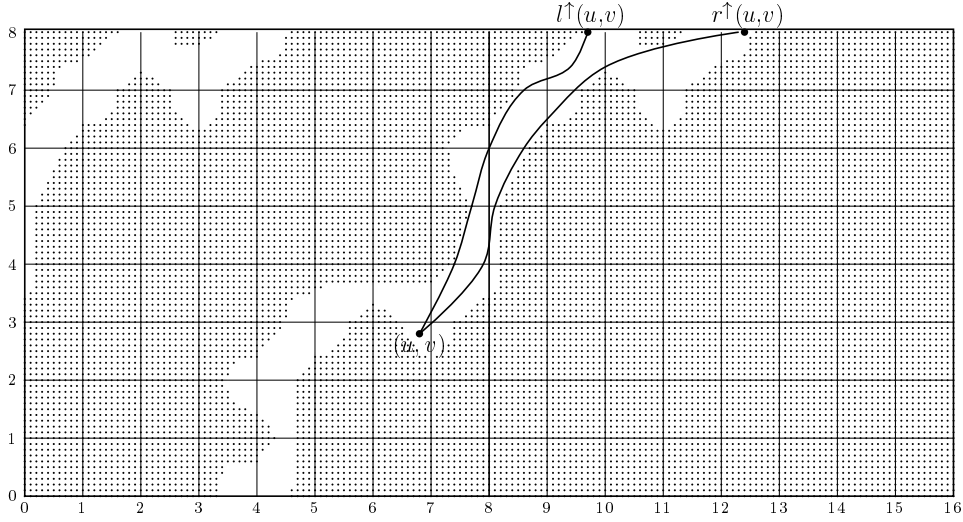}
  \caption{The set of points reachable from the top}
  \label{Fig3}
\end{figure}
Let us define four functions\\
\phantom{}\quad $l_\downarrow:g_\downarrow \rightarrow \{u|0 \le u \le 2m \} $, \quad $r_\downarrow:g_\downarrow \rightarrow \{u|0 \le u \le 2m \} $, \\
\phantom{}\quad $l^\uparrow:g^\uparrow \rightarrow \{u|0 \le u \le 2m \}, \quad $ $r^\uparrow:g^\uparrow \rightarrow \{u|0 \le u \le 2m \} $
\\
such that $l_\downarrow(u,v)$ (or $r_\downarrow(u,v)$) is the minimum (or maximum) value $u^*$ such that $(u,v)$ is reachable from $(u^*,0) \in B$.
Similarly, $l^\uparrow(u,v)$ (or $r^\uparrow(u,v)$) is the minimum (or maximum) value $u^*$, such that $(u,v)$ is reachable from $(u^*,n) \in T$.
The subset $g_\downarrow \subset D_\varepsilon$ is shown by a light area on Figure \ref{Fig2}. 
The same Figure \ref{Fig2} shows the values $l_\downarrow(u,v)$ and $r_\downarrow(u,v)$) for some point $(u,v) \in g_\downarrow$. 
Figure \ref{Fig3} shows the subset $g^\uparrow \subset D_\varepsilon$ and values $l^\uparrow(u,v)$ and $r^\uparrow(u,v)$.
The following two lemmas are proved in the paper \cite{frechet} by Alt and Godau (lemma 9 and 10 in \cite{frechet}).
\begin{lemma} \label{AltFirst}
The distance between closed polygonal curves $X$ and $Y$  is not greater than $\varepsilon$ if and only if there exists a number 
$u$, $0 \le u \le m$, such that
the points $(u + m, n)$ and $(u,0)$ are mutually reachable.
\end{lemma}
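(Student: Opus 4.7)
The plan is to prove both directions by translating between monotonic traverses of the closed curves and monotonic paths in the doubled free space $D_\varepsilon$. The key observation is that $D$ doubles only the horizontal dimension, so the cyclic shift $\tau_X$ of $X$ is encoded as the starting abscissa along the bottom, while the cyclic shift $\tau_Y$ of $Y$ must first be eliminated by a reparametrization of the traverses' common time variable.

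First I would reduce to the case $\tau_Y = 0$. Given any pair of traverses $\varphi_X, \varphi_Y$ achieving $\max_t d(\varphi_X(t),\varphi_Y(t)) \le \varepsilon$, let $t^* \in [0,1]$ be the unique time with $w_Y(t^*) + \tau_Y = n$ (the moment when $\varphi_Y$ crosses the seam $f_Y(0)=f_Y(n)$). Cyclically reparametrize both traverses by $t \mapsto (t - t^*) \bmod 1$; pointwise distances are preserved, and the concatenation of the two monotone pieces of $w_Y$ split at $t^*$ yields a function in $W_n$ with new shift $\tilde\tau_Y = 0$. The corresponding rewritten $\varphi_X$ is again a monotonic traverse whose shift I denote $u \in [0,m]$.

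For ($\Rightarrow$), assume $\tau_Y=0$ and $\tau_X = u$. Define a path $\gamma(t) = (\psi(t),\chi(t))$ in $D$ by $\psi(t) = w_X(t) + u$ (without wrapping) and $\chi(t) = w_Y(t)$. Then $\psi\colon [0,1]\to [u,u+m]$ and $\chi\colon [0,1]\to [0,n]$ are continuous and non-decreasing, so $\gamma$ is a monotonic path with $\gamma(0)=(u,0)$ and $\gamma(1)=(u+m,n)$. Using $D_\varepsilon = \widetilde D_\varepsilon \cup \{(u'+m,v)\mid (u',v)\in\widetilde D_\varepsilon\}$, the containment $\gamma(t)\in D_\varepsilon$ is equivalent to $d(f_X(\psi(t)\bmod m), f_Y(\chi(t))) \le \varepsilon$, which is exactly $d(\varphi_X(t),\varphi_Y(t)) \le \varepsilon$ by construction. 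Conversely, for ($\Leftarrow$), given a monotonic path $\gamma\subset D_\varepsilon$ from $(u,0)$ to $(u+m,n)$, parametrize it as $(\psi(t),\chi(t))$ with both coordinates non-decreasing, and read off $w_X(t) = \psi(t) - u \in W_m$, $\tau_X = u$, $w_Y(t) = \chi(t) \in W_n$, $\tau_Y = 0$. The same periodic identification of $D_\varepsilon$ translates membership of $\gamma$ in $D_\varepsilon$ back into $d(\varphi_X(t),\varphi_Y(t))\le\varepsilon$, so $\delta(X,Y)\le\varepsilon$.

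The main obstacle I expect is the bookkeeping around $\tau_Y \ne 0$: the function $t \mapsto s(w_Y(t);\tau_Y)$ has a jump discontinuity at the wrap, but $\varphi_Y = f_Y \circ s(w_Y(\cdot);\tau_Y)$ is continuous because $f_Y(0)=f_Y(n)$. Verifying that the cyclic time-reparametrization really produces a monotonic traverse in the formal sense of the definition (in particular, that the spliced warping function lies in $W_n$ and that the resulting $\varphi_X$ still has the prescribed form with some admissible $\tilde\tau_X$) is where the care lies; the geometric translation between paths and traverses is then essentially automatic from the periodic definition of $D_\varepsilon$.
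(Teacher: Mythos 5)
The paper does not actually prove Lemma \ref{AltFirst}: it is imported verbatim from Alt and Godau (Lemmas 9--10 of \cite{frechet}), so there is no in-paper argument to compare against. Your proposal is a correct, self-contained reconstruction of the standard argument: unroll the free space horizontally, absorb the shift of $Y$ by a cyclic reparametrization of the common time variable, and then translate between a pair of warpings $(w_X,w_Y)$ with $\tau_Y=0$ and a monotone path from $(u,0)$ to $(u+m,n)$ via $\psi=w_X+u$, $\chi=w_Y$, using $f_X(0)=f_X(m)$ to identify the two copies of $\widetilde D_\varepsilon$. Two small points deserve tightening but are not gaps. First, the time $t^*$ with $w_Y(t^*)+\tau_Y=n$ need not be unique ($w_Y$ may be constant on an interval); any choice works, and you should record explicitly that the induced shift of $X$ is $\tilde\tau_X=s(w_X(t^*);\tau_X)\in[0,m]$, which is precisely the $u$ appearing in the statement. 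Second, in the converse direction the assertion that a monotonic path --- in this paper a connected set $\gamma$ with $(u-u')(v-v')\ge 0$ for all pairs of its points --- admits a parametrization by two continuous non-decreasing functions needs a sentence of justification: the map $(u,v)\mapsto u+v$ is injective on $\gamma$ by the monotonicity condition, its image is an interval, and the two coordinate functions of the inverse are non-decreasing with continuous (affine) sum, hence each is continuous. With these remarks added, the proof is sound.
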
 
\begin{lemma} \label{AltSecond}
Two points $(u^{top},n) \in T$ and $(u_{bot},0) \in B$ are mutually reachable if and only if 
$(u^{top},n) \in g_\downarrow $, $(u_{bot},0) \in g^\uparrow$ and $l^\uparrow(u_{bot},0) \le u^{top} \le r^\uparrow(u_{bot},0)$.
\end{lemma}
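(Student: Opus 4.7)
The forward direction is essentially tautological. A monotonic path $\gamma$ witnessing that $(u_{bot},0)$ and $(u^{top},n)$ are mutually reachable also certifies $(u^{top},n)\in g_\downarrow$ and $(u_{bot},0)\in g^\uparrow$. Moreover, $u^{top}$ is one of the values $u^*$ for which $(u_{bot},0)$ is reachable from $(u^*,n)$, so by the extremal definitions it is sandwiched between $l^\uparrow(u_{bot},0)$ and $r^\uparrow(u_{bot},0)$.

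For the reverse direction, the plan is to invoke the three monotonic paths guaranteed by the hypothesis: $\gamma_\ell$ joining $(u_{bot},0)$ to $(l^\uparrow(u_{bot},0),n)$, $\gamma_r$ joining $(u_{bot},0)$ to $(r^\uparrow(u_{bot},0),n)$, and $\gamma_*$ joining some $(u_0,0)\in B$ to $(u^{top},n)$. The case $u_0=u_{bot}$ is immediate. Otherwise I would assume $u_0>u_{bot}$ (the case $u_0<u_{bot}$ is symmetric, using $\gamma_\ell$ instead of $\gamma_r$). Path monotonicity forces $u_0\le u^{top}$ and $u_{bot}\le r^\uparrow(u_{bot},0)$, while the hypothesis gives $u^{top}\le r^\uparrow(u_{bot},0)$. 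So at the bottom edge $\gamma_r$ is strictly to the left of $\gamma_*$, while at the top edge $\gamma_r$ is weakly to the right of $\gamma_*$.

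From such a configuration the two monotonic arcs must share a common point $(u^*,v^*)$, and I would then concatenate them: travel along $\gamma_r$ from $(u_{bot},0)$ up to $(u^*,v^*)$, and then along $\gamma_*$ from $(u^*,v^*)$ to $(u^{top},n)$. The monotonicity of this spliced path reduces to a short case check: if two points sit on the same factor, monotonicity is inherited; otherwise the lower point $(u_1,v_1)$ lies on $\gamma_r$ with $v_1\le v^*$ and the upper point $(u_2,v_2)$ lies on $\gamma_*$ with $v_2\ge v^*$, and monotonicity of each factor yields $u_1\le u^*\le u_2$. The resulting monotonic path witnesses mutual reachability of $(u_{bot},0)$ and $(u^{top},n)$.

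The main obstacle I foresee is rigorously producing the crossing point $(u^*,v^*)$. A monotonic path is only required to be a connected subset of $D_\varepsilon$ satisfying $(u-u')(v-v')\ge 0$; it need not be a parameterised injective curve, and $u$ need not be a single-valued function of $v$. My plan is to first reduce to the case of a continuous monotone arc joining the two endpoints, which is legitimate inside a connected monotonic set, and then to invoke a Jordan-style separation argument: the arc $\gamma_r$, together with the right portion of the top side, the right side, and the right portion of the bottom side of $D$, encloses a region whose bottom boundary contains $(u_0,0)$ and whose top boundary excludes $(u^{top},n)$; hence any continuous path inside $D$ from $(u_0,0)$ to $(u^{top},n)$, in particular $\gamma_*$, must cross $\gamma_r$.
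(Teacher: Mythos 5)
Your proposal is correct and follows essentially the approach the paper itself uses: the paper does not prove this lemma directly (it cites Alt and Godau), but its proof of the analogous Lemma \ref{OurSecond} is exactly your argument --- two monotonic paths whose endpoints interleave on the top and bottom sides must intersect, and the two pieces are spliced at the intersection point into a single monotonic path --- with your case split on the position of $u_0$ relative to $u_{bot}$ being the natural adaptation to the $l^\uparrow$/$r^\uparrow$ two-sided condition. The one step you flag as delicate, rigorously producing the crossing point of two connected monotonic subsets, is also left at the level of a picture in the paper (Figures \ref{Fig5} and \ref{Fig6}), so your Jordan-style plan is, if anything, more careful than the source.
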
 
According to these two lemmas testing the inequality $\delta(X,Y) \le \varepsilon$ is reduced in the paper \cite{frechet} to determining if there exists a value $u$ that fulfills
\begin{equation} \label{AltDecision}
(u,0) \in g^\uparrow, \quad (u+m,n) \in g_\downarrow, \quad l^\uparrow(u,0) \le u+m \le r^\uparrow(u,0).  
\end{equation}
Our algorithm is also based on Lemma \ref{AltFirst}. 
However, instead of Lemma \ref{AltSecond} we rely on a similar Lemma \ref{OurSecond}.
We prove Lemma \ref{OurSecond} in the same manner as Lemma \ref{AltSecond} has been proved in \cite{frechet}.      

\begin{lemma} \label{OurSecond}
Two points $(u^{top},n) \in T$ and $(u_{bot},0) \in B$ are mutually reachable if and only if\\ 
\phantom{} \quad\quad $(u^{top},n) \in g_\downarrow $, \quad $(u_{bot},0) \in g^\uparrow$, \quad $u^{top} \le r^\uparrow(u_{bot},0)$, 
\quad $u_{bot} \le r_\downarrow(u^{top},n)$. 
\end{lemma}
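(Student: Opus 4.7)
The plan is to handle both directions of the equivalence; necessity is immediate from the definitions, and sufficiency reduces to pasting two monotonic paths together at a topologically forced crossing point.

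For necessity, suppose a monotonic path $\gamma \subset D_\varepsilon$ connects $(u^{top},n)$ and $(u_{bot},0)$. Then $(u^{top},n)$ is reachable from $(u_{bot},0)\in B$, so $(u^{top},n)\in g_\downarrow$, and symmetrically $(u_{bot},0)\in g^\uparrow$. Because $r^\uparrow(u_{bot},0)$ is, by definition, the \emph{maximum} $u^*$ with $(u_{bot},0)$ reachable from $(u^*,n)$, the inequality $u^{top}\le r^\uparrow(u_{bot},0)$ is automatic; the bound $u_{bot}\le r_\downarrow(u^{top},n)$ is symmetric.

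For sufficiency, the plan is to extract two witness paths. From $u_{bot}\le r_\downarrow(u^{top},n)$ I obtain a monotonic path $P_1\subset D_\varepsilon$ joining $(u^{top},n)$ to some $(u_4,0)\in B$ with $u_4\ge u_{bot}$; the monotonicity axiom applied to its two endpoints also forces $u_4\le u^{top}$. Symmetrically, from $u^{top}\le r^\uparrow(u_{bot},0)$ I obtain $P_2\subset D_\varepsilon$ joining $(u_{bot},0)$ to some $(u_3,n)\in T$ with $u_3\ge u^{top}$ (and, by monotonicity, $u_3\ge u_{bot}$). Hence on $T$ the endpoint of $P_2$ lies weakly to the right of the endpoint of $P_1$, while on $B$ the endpoint of $P_1$ lies weakly to the right of the endpoint of $P_2$. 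Topologically, $P_1$ separates the closed rectangle $D$ into a ``left'' region and a ``right'' region; the two endpoints of $P_2$ lie on opposite sides, and since $P_2$ is connected it must meet $P_1$ at some point $(u^*,v^*)$.

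The final step is to define $\gamma$ as the portion of $P_1$ containing all its points $(u,v)$ with $v\ge v^*$, together with the portion of $P_2$ containing all its points with $v\le v^*$. This set lies in $D_\varepsilon$, is connected (the two pieces share $(u^*,v^*)$), and contains $(u^{top},n)$ and $(u_{bot},0)$. Monotonicity is verified by cases: within either piece it is inherited from $P_1$ or $P_2$; for a point $(u,v)$ in the upper piece and $(u',v')$ in the lower piece, the monotonicity of $P_1$ and $P_2$ gives $u\ge u^*\ge u'$ and $v\ge v^*\ge v'$, so $(u-u')(v-v')\ge 0$.

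The main obstacle is the crossing claim for $P_1$ and $P_2$. It is essentially the same argument used by Alt and Godau for Lemma~\ref{AltSecond}, but some care is needed because a monotonic path is defined only as a connected subset of $\mathbb{R}^2$, not as a curve. The required ingredient is that the projection of each $P_i$ onto the $v$-axis is the whole interval $[0,n]$ (by connectedness and the fact that both endpoints lie on $B$ and $T$), and that the ``left'' and ``right'' envelopes $v\mapsto\inf\{u:(u,v)\in P_i\}$ and $v\mapsto\sup\{u:(u,v)\in P_i\}$ are monotone non-decreasing in $v$, which follows directly from the monotonicity condition.
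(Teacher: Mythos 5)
Your proof is correct and takes essentially the same route as the paper: necessity falls out of the definitions of $r_\downarrow$ and $r^\uparrow$ as maxima, and sufficiency is obtained by taking the two witness paths from $(u^{top},n)$ to $(r_\downarrow(u^{top},n),0)$ and from $(u_{bot},0)$ to $(r^\uparrow(u_{bot},0),n)$, arguing that the endpoint inequalities force them to intersect, and splicing them at a crossing point. Your extra care about why the crossing is topologically forced (via the $v$-projections and monotone envelopes) only sharpens the step that the paper justifies by appeal to its figure.
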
 
\begin{proof}
Obviously, if $(u^{top},n) \in T$ and $(u_{bot},0) \in B$ are mutually reachable then 
$(u^{top},n) \in g_\downarrow $, $(u_{bot},0) \in g^\uparrow$ and $u^{top} \le r^\uparrow(u_{bot},0)$, $u_{bot} \le r_\downarrow(u^{top},n)$.
The reverse implication is also valid, which is illustrated by Figure \ref{Fig5}.\\
\begin{figure}
  \centering
  \includegraphics{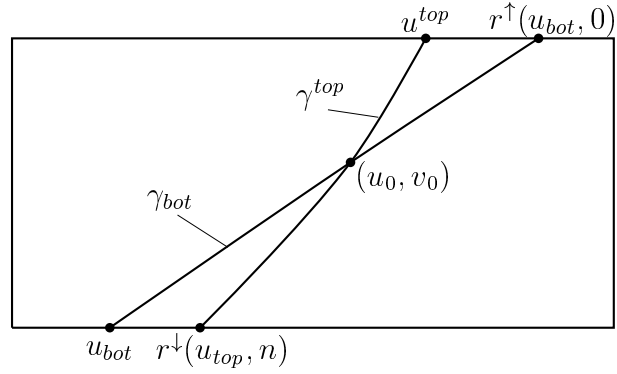}
  \caption{There is a path between $(u_{bot}, 0)$ and $(u^{top},n)$}
  \label{Fig5}
\end{figure}
It follows from condition $(u^{top},n) \in g_\downarrow $ that a monotonic path $\gamma^{top} \subset D_\varepsilon$
connecting $(u^{top},n)$ and $(r_\downarrow(u^{top},n),0)$ exists.
Similarly, it follows from condition $(u_{bot},0) \in g^\uparrow$ 
that a monotonic path $\gamma_{bot} \subset D_\varepsilon$ connecting $(u_{bot},0)$ and $(r^\uparrow(u_{bot},0),n)$ exists. 
Both paths are connected subsets and therefore they intersect in at least one point $(u_0, v_0)$ according to conditions $u^{top} \le r^\uparrow(u_{bot},0)$, $u_{bot} \le r_\downarrow(u^{top},n)$.
Let us build a path $\gamma$ that consists of a part of the path $\gamma_{bot}$ from $(u_{bot},0)$ to $(u_0, v_0)$ 
and a part of the path $\gamma^{top}$ from $(u_0, v_0)$ to $(u^{top},n)$. 
The path $\gamma$ is monotonic, it is contained inside $D_\varepsilon$ and it connects $(u_{bot},0)$ and $(u^{top},n)$.
\end{proof}

According to Lemmas \ref{AltFirst} and \ref{OurSecond} testing condition $\delta(X,Y) \le \varepsilon$ is reduced to finding such $u$ that fulfills
\begin{equation} \label{OurDecision}
(u,0) \in g^\uparrow, \quad (u+m,n) \in g_\downarrow, \quad u+m \le r^\uparrow(u,0),
\quad u \le r_\downarrow(u+m,n).  
\end{equation}
The algorithm presented in the paper by Alt and Godau \cite{frechet} builds a data structure sufficient to test the consistency of conditions (\ref{AltDecision}) using a divide and conquer approach in $O(mn \log(mn))$ time.
Our algorithm is based on condition (\ref{OurDecision}) and therefore uses other data structures that represent functions $r^\uparrow$
and $r_\downarrow$. It is shown that these data structures can be built in two passes in $O(mn)$ time.
Therefore, the subsets $g^\uparrow$, $g_\downarrow$ and functions $r^\uparrow$, $r_\downarrow$ become the main focus of the further considerations.
Section \ref{Achievability} shows that in order to test condition (\ref{OurDecision}) 
it is sufficient to have a finite set of parameters of subsets $g^\uparrow$, $g_\downarrow$ and functions $r^\uparrow$, $r_\downarrow$,
which take $O(mn)$ amount of space.
Section \ref{SimilarityNew} shows that it takes $O(mn)$ time to test condition (\ref{OurDecision}) based on these data.
Finally, Section \ref{InputData} shows that computing these data takes $O(mn)$ time.

\section{\label{Achievability}Formal properties of reachability.}
\begin{lemma} \label{MyMonotony}
For any two points $(u_1, v_1) \in g_\downarrow$ and $(u_2, v_2) \in g_\downarrow$ such that $u_1 \le u_2$ and $v_1 \ge v_2$
the inequality $r_\downarrow(u_1, v_1) \le r_\downarrow(u_2, v_2)$ holds.\\
For any two points $(u_1, v_1) \in g^\uparrow$ and $(u_2, v_2) \in g^\uparrow$ such that $u_1 \le u_2$ and $v_1 \ge v_2$ 
the inequality $r^\uparrow(u_1, v_1) \le r^\uparrow(u_2, v_2)$ holds. 
\end{lemma}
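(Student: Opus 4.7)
The two parts of the lemma are symmetric (the second is obtained from the first by reflecting $D$ top-to-bottom, which swaps $B$ with $T$ and $r_\downarrow$ with $r^\uparrow$), so I would only write out the argument for $r_\downarrow$. The plan is to argue by contradiction: assuming $r_\downarrow(u_1,v_1) > r_\downarrow(u_2,v_2)$, I would construct a monotonic path in $D_\varepsilon$ from $(r_\downarrow(u_1,v_1),0)\in B$ to $(u_2,v_2)$, which contradicts the definition of $r_\downarrow(u_2,v_2)$ as the maximum $u^*$ with $(u_2,v_2)$ reachable from $(u^*,0)$.

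Write $a_i=r_\downarrow(u_i,v_i)$ and fix witnessing monotonic paths $\gamma_i\subset D_\varepsilon$ connecting $(a_i,0)$ to $(u_i,v_i)$. The hypotheses give $u_1\le u_2$, $v_1\ge v_2$, and (by the working assumption) $a_1>a_2$; moreover the componentwise comparability of $(a_i,0)$ with $(u_i,v_i)$ inside the chain $\gamma_i$ forces $a_i\le u_i$, so $a_1\le u_1\le u_2$. The crucial step is to show that $\gamma_1\cap\gamma_2\neq\emptyset$. I would establish this by a crossing argument: at height $v=0$ the rightmost $u$-coordinate on $\gamma_i$ equals $a_i$ (any point of $\gamma_i$ on $B$ strictly to the right of $a_i$ would contradict the maximality of $a_i$), so $\gamma_1$ is strictly to the right of $\gamma_2$ there; at height $v_2$ the monotonicity of $\gamma_1$ forces every point of $\gamma_1$ on that horizontal line to have $u$-coordinate at most $u_1\le u_2$, while $(u_2,v_2)\in\gamma_2$. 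Formally, comparing the non-decreasing right-profile functions $R_i(v)=\max\{u:(u,v)\in\gamma_i\}$ on $[0,v_2]$ yields a height $v_0$ at which the two paths share a common point $(u_0,v_0)$.

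Once such an intersection is in hand, the rest is bookkeeping: let $\gamma_1^-$ be the set of points of $\gamma_1$ lying componentwise below $(u_0,v_0)$, and $\gamma_2^+$ the set of points of $\gamma_2$ lying componentwise above it. Each piece is a connected monotone sub-chain, they share $(u_0,v_0)$, and every point of $\gamma_1^-$ is componentwise below every point of $\gamma_2^+$. Hence $\gamma_1^-\cup\gamma_2^+$ is a monotonic path contained in $D_\varepsilon$ that joins $(a_1,0)$ to $(u_2,v_2)$. This forces $r_\downarrow(u_2,v_2)\ge a_1$, contradicting $r_\downarrow(u_2,v_2)=a_2<a_1$.

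The main obstacle is the crossing step. Because a monotonic path is defined as a set rather than as a parameterized curve, one has to argue carefully that the sign change of $R_1-R_2$ really yields a point shared by $\gamma_1$ and $\gamma_2$ and not just by their right-profile graphs. The delicate case is a common height at which $R_1$ or $R_2$ is discontinuous: then the corresponding $\gamma_i$ contains a horizontal segment at that height, and the intersection has to be extracted by comparing the two horizontal segments (or a segment and an isolated point) at $v=v_0$. Everything else in the proof is a direct unpacking of the definitions of monotonic path and of $r_\downarrow$.
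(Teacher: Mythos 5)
Your proposal is correct and follows essentially the same route as the paper: assume $r_\downarrow(u_1,v_1) > r_\downarrow(u_2,v_2)$, show the two witnessing monotonic paths must intersect, and splice them at the intersection point to contradict the maximality of $r_\downarrow(u_2,v_2)$. The only difference is that the paper justifies the intersection step by appeal to a figure, whereas you supply an explicit crossing argument via the right-profile functions (correctly flagging the horizontal-segment case), so your write-up is if anything more complete than the published one.
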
 
\begin{proof}
Let $\gamma_1$ and $\gamma_2$ be two monotonic paths that connect  $(u_1, v_1)$ with $(r_\downarrow(u_1, v_1), 0)$ and $(u_2, v_2)$ with $(r_\downarrow(u_2, v_2),0)$ respectively.
Let us assume that $r_\downarrow(u_1, v_1) > r_\downarrow(u_2, v_2)$. 
As it is shown on Figure~\ref{Fig6} the paths $\gamma_1$ and $\gamma_2$ intersect at some point  $(u_0,v_0)$.
Therefore, a monotonic path that connects $(u_2,v_2)$ with $(r_\downarrow(u_1, v_1), 0)$ exists.
This path consists of a part of the path $\gamma_1$ from $(r_\downarrow(u_1, v_1), 0)$ to $(u_0,v_0)$ and a part of the path $\gamma_2$ from $(u_0,v_0)$ to  $(u_2, v_2)$. 
This means that the point $(u_2,v_2)$ is reachable from a point that is located to the right of the point $(r_\downarrow(u_2, v_2),0)$.
This contradicts with the definition of function $r_\downarrow$. 
Therefore, the assumption $r_\downarrow(u_1, v_1) > r_\downarrow(u_2, v_2)$ is proved to be wrong. The first statement of the theorem is proved. 
The proof of the second statement is similar.
\begin{figure}
  \centering
  \includegraphics{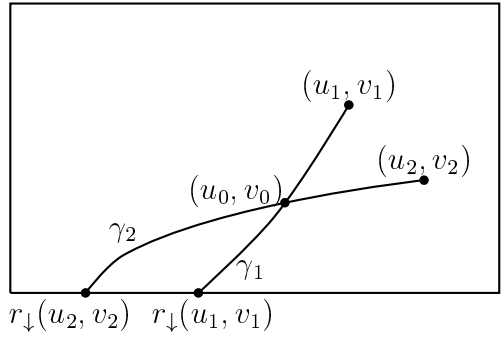}
  \caption{Monotonicity of $r_{\downarrow}$}
  \label{Fig6}
\end{figure}     
\end{proof}
The property of functions $r_\downarrow$ and $r^\uparrow$ stated in Lemma \ref{MyMonotony} will be referred to as monotonicity of these functions. 

For each pair $(i,j)$, $1 \le i \le 2m$, $1 \le j \le n$, denote $D(i,j)$ a square cell 
$$D(i,j) = \{ (u,v)| i-1 \le u \le i, j-1 \le v \le j \}$$
and denote $T(i,j)$, $B(i,j)$, $L(i,j)$ and $R(i,j)$ the top, bottom, left and right sides of this square cell, respectively.
We extend these definitions so that $T(i,0) = B(i,1)$ and $R(0,j) = L(1,j)$.
Denote $D_\varepsilon(i,j) = D_\varepsilon \cap D(i,j)$. It is significant that $D_\varepsilon(i,j)$ is convex for each pair $(i,j)$ \cite{frechet}. 
It is not difficult to prove that for each pair $(i,j)$ 
the intersections $g_\downarrow \cap T(i,j)$, $g_\downarrow \cap R(i,j)$, $g_\uparrow \cap T(i,j)$ and $g_\uparrow \cap R(i,j)$ are also convex, 
that is they are intervals on the sides of the cell $D(i,j)$.
\begin{lemma} \label{Constancy}
For any pair $(i,j)$ the following two statements are valid:\\
if $g_\downarrow \cap R(i,j) \ne \emptyset $ then the function $r_\downarrow$ is constant on $g_\downarrow \cap R(i,j)$;\\
if $g^\uparrow \cap T(i,j) \ne \emptyset $ then the function $r^\uparrow$ is constant on $g^\uparrow \cap T(i,j)$. \phantom{}
\end{lemma}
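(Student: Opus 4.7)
The plan is to prove the first statement, about $r_\downarrow$ and $R(i,j)$; the second statement, about $r^\uparrow$ and $T(i,j)$, admits an analogous proof in which reachability from below is replaced by reachability from above and the right side of the cell is replaced by the top side. For any two points $(i, v_a)$ and $(i, v_b)$ of $g_\downarrow \cap R(i,j)$ with $v_a \le v_b$, I would prove $r_\downarrow(i, v_a) = r_\downarrow(i, v_b)$ by establishing the two inequalities separately. The inequality $r_\downarrow(i, v_b) \le r_\downarrow(i, v_a)$ is immediate from Lemma~\ref{MyMonotony} applied with $(u_1, v_1) = (i, v_b)$ and $(u_2, v_2) = (i, v_a)$.

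For the opposite inequality I would take an arbitrary monotone path $\gamma$ from $(r_\downarrow(i, v_a), 0)$ to $(i, v_a)$ and modify it into a monotone path from the same starting point to $(i, v_b)$. Let $p = (u_p, v_p)$ denote the componentwise minimal point of $\gamma \cap D(i,j)$, which by monotonicity of $\gamma$ must sit on either $B(i,j)$ or $L(i,j)$. I would then replace the portion of $\gamma$ inside the cell by the straight segment from $p$ to $(i, v_b)$; since $D_\varepsilon(i,j)$ is convex and both endpoints lie in it, the segment stays inside $D_\varepsilon$.

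The step I expect to require the most care is verifying that the concatenation of the pre-$p$ arc of $\gamma$ with the new segment is globally monotone. The essential ingredient is the inequality $v_p \le v_a$, which I would establish by cases: for $p \in B(i,j)$ it is automatic since $v_p = j - 1 \le v_a$, and for $p \in L(i,j)$ it follows from $u_p = i - 1 < i$ combined with the pairwise monotonicity of $\gamma$ applied to $p$ and $(i, v_a)$. With $v_p \le v_a \le v_b$ and $u_p \le i$ in hand, the new segment is non-decreasing in both coordinates, and each of its points componentwise dominates $p$, which in turn dominates every point on the pre-$p$ arc of $\gamma$. This makes the concatenated path monotone, provides a path in $D_\varepsilon$ from $(r_\downarrow(i, v_a), 0)$ to $(i, v_b)$, and yields $r_\downarrow(i, v_a) \le r_\downarrow(i, v_b)$, closing the argument.
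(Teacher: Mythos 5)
Your proof is correct and follows essentially the same route as the paper: one inequality comes from Lemma~\ref{MyMonotony}, the other from building a monotone path to the higher point using convexity of $D_\varepsilon(i,j)$. The paper's construction of that path is just simpler than yours --- it appends the vertical segment from $(i,v_a)$ to $(i,v_b)$ (contained in $D_\varepsilon(i,j)$ by convexity) to the end of $\gamma$, rather than rerouting from the cell's entry point $p$.
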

\begin{proof} 
Let us prove the first statement of the lemma. 
Let $(i,v_1)$ and $(i,v_2)$, $v_1 < v_2$,  be two points from $g_\downarrow \cap R(i,j)$. 
Any point from $B$ that is reachable from $(i,v_1)$ is also reachable from $(i,v_2)$.
Therefore, $r_\downarrow(i,v_1) \le r_\downarrow(i,v_2) $. 
According to Lemma \ref{MyMonotony} on monotonicity of the function $r_\downarrow$ the inequality $r_\downarrow(i,v_1) \ge r_\downarrow(i,v_2) $ is valid as well.
Therefore, $r_\downarrow(i,v_1) = r_\downarrow(i,v_2)$.
The proof of the second statement is similar.
\end{proof}
Restrictions of $r_\downarrow$ to $g_\downarrow \cap T(i,j)$ and $r^\uparrow$ to $g^\uparrow \cap R(i,j)$ are more complex. 
However, they can be presented in a certain standard form.
Let us first take a look at the function $r_\downarrow$ and then extend the result to $r^\uparrow$. 
Let $int \subset g_\downarrow \cap T(i,j)$ be a connected subset called an interval. Let $I$ be a set of pairwise disjoint intervals
such that $g_\downarrow \cap T(i,j) = \bigcup\limits_{int \in I}int$.
\begin{definition}
The restriction of $r_\downarrow$ to $g_\downarrow \cap T(i,j)$ can be expressed in a standard form with a set $I$ of intervals if for each interval $int \in I$ one of the following is valid:\\
-- either $r_\downarrow (u,i) = u$ for all $u \in int$;\\
-- or $r_\downarrow (u_1,i) = r_\downarrow (u_2,i)$ for all $u_1 \in int$, $u_2 \in int$.
\end{definition}
The set $I$  of intervals is ordered so that any nonempty subset has the leftmost and the rightmost intervals (possibly equal). 
The rightmost interval from $I$ is closed. All the other intervals are left-closed (contain the left endpoint) and right-open. 
\begin{lemma}\label{QuantityOfInt}
The restriction of $r_\downarrow$ to $g_\downarrow \cap T(i,j)$ can be expressed in a standard form on a set of no more than $(2j+1)$ intervals.
\end{lemma}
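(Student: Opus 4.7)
The plan is induction on $j$, with base case $j = 0$: there $T(i, 0) = B(i, 1)$ lies on the bottom $B$, so $r_\downarrow(u, 0) = u$ identically, giving one identity interval, which matches $2 \cdot 0 + 1 = 1$.

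For the inductive step I would use two ingredients already at hand. First, by convexity of $D_\varepsilon(i, j)$, inside one cell monotonic reachability reduces to coordinatewise dominance, since the straight segment between comparable free-space points of a convex cell is itself a monotonic path. Second, by Lemma \ref{Constancy}, $r_\downarrow$ takes a single value $r_L$ on the (possibly empty) set $g_\downarrow \cap L(i, j) = g_\downarrow \cap R(i-1, j)$. Writing $B^* = g_\downarrow \cap T(i, j-1)$ for the set through which the cell can be entered from below, these two facts yield the explicit representation
\[
 r_\downarrow(u, j) = \max\bigl(M_B(u), r_L\bigr),\qquad
 M_B(u) = \sup\{r_\downarrow(u', j-1) : u' \in B^*,\ u' \le u\},
\]
valid on $g_\downarrow \cap T(i, j)$, with $r_L$ included only when $g_\downarrow \cap L(i, j) \neq \emptyset$.

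The standard-form intervals of $r_\downarrow|_{g_\downarrow \cap T(i, j)}$ then split into three contributions. By Lemma \ref{MyMonotony}, $r_\downarrow|_{B^*}$ is non-decreasing in $u$ along the horizontal line $v = j - 1$, so $M_B$ agrees with it on $B^*$ and preserves its at-most $2(j-1) + 1 = 2j - 1$ standard-form intervals coming from the inductive hypothesis. Extending $M_B$ past $\max B^*$ to the right end of $T(i, j) \cap D_\varepsilon$ adds at most one new constant piece on the right (this only happens if the rightmost piece of $r_\downarrow|_{B^*}$ is an identity piece ending below the right end of $T(i, j) \cap D_\varepsilon$). Combining with $r_L$ via $\max$ collapses the prefix of the domain on which $M_B \le r_L$ into a single leading constant piece of value $r_L$, at the cost of at most one further interval: the worst case is when the first $M_B$-piece with value exceeding $r_L$ is an identity piece $[a, b]$ that gets truncated to $[r_L, b]$, its absorbed half $[a, r_L]$ fusing into the $r_L$-prefix. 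Summing, $(2j - 1) + 1 + 1 = 2j + 1$.

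The step most prone to over-counting is the last one. One must check that truncating an identity piece of $M_B$ at $u = r_L$ does \emph{not} add two separate intervals, because the discarded left half merges seamlessly with the already-present $r_L$-prefix; the monotonicity of $r_\downarrow$ along horizontal lines ensures that $\{u : M_B(u) \le r_L\}$ is always an initial segment of the domain, so the absorption is always a clean prefix rather than an interior portion.
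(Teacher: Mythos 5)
Your induction and interval accounting coincide with the paper's own proof: your $r_L$-prefix, the part of $M_B$ inherited from $B^*$, and the right extension are exactly the paper's left, middle and right parts of $g_\downarrow \cap T(i,j^*)$, each row adding at most two new intervals to the at most $2j-1$ from the inductive hypothesis. (The truncation scenario you guard against at the end cannot actually occur, since Lemma \ref{MyMonotony} forces $r_L \le r_\downarrow(u',j-1)$ for every $u' \in B^*$ so that the $r_L$-prefix is precisely $\{u < \min B^*\}$, but your conservative $+1$ bound for it is harmless.)
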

\begin{proof}
Any point $(u,0) \in D_\varepsilon$ is reachable from itself and belongs to $B$. Therefore, $r_\downarrow(u,0)=u$ for all $(u,0) \in g_\downarrow \cap T(i,0)$ and 
the restriction of $r_\downarrow$ to $g_\downarrow \cap T(i,0)$ can be expressed in a standard form with a single interval.

Let us choose an arbitrary $j^*$, $0<j^* \le n$, for the following considerations. 
We will assume that the restriction of $r_\downarrow$ to $g_\downarrow \cap T(i,j^*-1)$ can be expressed in a standard form on a set $I(i,j^*-1)$
of no more than $(2(j^*-1)+1)$ intervals. 
Based on this assumption we will prove that the restriction of $r_\downarrow$ to $g_\downarrow \cap T(i,j^*)$ can be expressed 
in a standard  form on a set $I(i,j^*)$ of no more than $(2j^*+1)$ intervals. 
The proof is partially illustrated by Figure \ref{Fig8}.
\begin{figure}
  \centering
\includegraphics{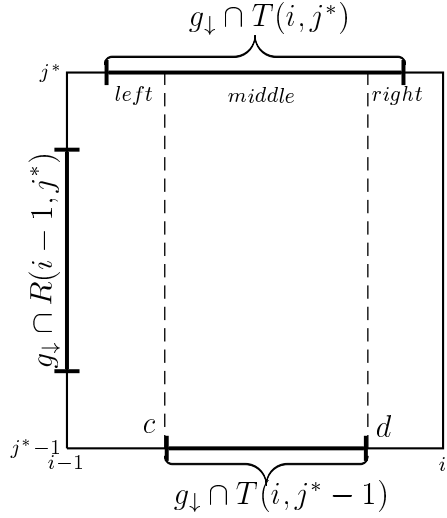}
  \caption{The left, middle and right parts of $g_\downarrow \cap T(i,j^*)$}
  \label{Fig8}
\end{figure}

Denote $c$ and $d$
the horizontal coordinate of the left-most and the right-most points of $g_\downarrow \cap T(i,j^*-1)$ and express $g_\downarrow \cap T(i,j^*)$ 
as the union of three (possibly empty) parts: left, middle and right (see Figure \ref{Fig8}). 
Let us examine the function $r_\downarrow$ on each of these three parts.

The left part consists of points $(u,j^*) \in g_\downarrow \cap T(i,j^*)$, $u < c$.  
Let $(u^*,j^*)$ be one of these points. No point from $(u,j^*-1) \in g_\downarrow \cap T(i,j^*-1)$ is reachable from $(u^*,j^*)$.
Nevertheless, since the point $(u^*,j^*) \in g_\downarrow \cap T(i,j^*)$ exists, $g_\downarrow \cap R(i-1,j^*) \ne \emptyset$ and
since $D_\varepsilon(i,j^*)$ is convex, the point $(u^*,j^*)$ is reachable from any point $(u,v) \in g_\downarrow \cap R(i-1,j^*)$.
According to Lemma \ref{Constancy} the function $r_\downarrow$ is constant on the set $g_\downarrow \cap R(i-1,j^*)$. 
Therefore, the value of the function $r_\downarrow$ on the left part is also constant and is the same as the value of $r_\downarrow$ on $g_\downarrow \cap R(i-1,j^*)$. 
Therefore, as long as the left part exists, the set $I(i,j^*)$ contains an interval that is not present in $I(i,j^*-1)$. This interval is just the left part.  

The middle part consists of points $(u,j^*) \in g_\downarrow \cap T(i,j^*)$, $c \le u \le d$. Let $(u^*,j^*)$ be one of these points. 
Since $D_\varepsilon(i,j^*)$ is convex, the point $(u^*,j^*)$ is reachable from all points $(i,v) \in g_\downarrow \cap R(i-1,j^*)$ 
and from all points $(u,j^*-1) \in g_\downarrow \cap T(i,j^*-1)$, $u \le u^*$. 
Due to monotonicity of the function $r_\downarrow$, it takes its maximum value $r_\downarrow(u^*,j^*-1)$ at the rightmost point $(u^*,j^*-1)$, 
which is also reachable from $(u^*,j^*)$. 
Therefore, $r_\downarrow(u^*,j^*)=r_\downarrow(u^*,j^*-1)$ and this is also valid for any point $(u,j^*)$ on the middle part. 
Since the function $r_\downarrow$ on $g_\downarrow \cap T(i,j^*-1)$ can be expressed in a standard form on a set $I(i,j^*-1)$, 
it can also be expressed on the middle part of the set $g_\downarrow \cap T(i,j^*)$  in a standard form on a set $I(i,j^*)$. 
The set $I(i,j^*)$ is obtained from $I(i,j^*-1)$ by excluding some intervals and changing the endpoints of some other intervals. 
Thus the number of intervals of the middle part does not increase.

The right part consists of points $(u,j^*) \in g_\downarrow \cap T(i,j^*)$, $u \ge d$. 
Let $(u^*,j^*)$ be one of these points. 
The point $(u^*,j^*)$ is reachable from all points $(i,v) \in g_\downarrow  \cap R(i-1,j^*)$ and all points $(u,j^*) \in g_\downarrow \cap T(i,j^*-1)$. 
The function $r_\downarrow$ takes its maximum value $r_\downarrow(d,j^*-1)$ at the rightmost point, which is reachable from $(u^*,j^*)$. 
Therefore, the value $r_\downarrow(u^*,j^*)$ is constant on the right part and equals to $r_\downarrow(d,j^*-1)$. 
Therefore, as long as the right part is not empty, the set $I(i,j^*)$ contains an interval that is not present in $I(i,j^*-1)$. This interval coincides with the right part.

One can see that the number of intervals in $I(i,j)$ can change compared to $I(i,j-\nolinebreak 1)$. 
However, the number of intervals can not increase by more than $2$ intervals. 
These are the intervals that coincide with the left and the right parts of $g_\downarrow \cap T(i,j^*)$.
\end{proof}
Similarly, one can prove the next lemma that we leave without proof.
\begin{lemma}
The restriction of $r^\uparrow$ to $g^\uparrow \cap R(i,j)$ can be expressed in a standard form on a set of no more than $(2i+1)$ intervals.  
\end{lemma}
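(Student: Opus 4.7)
My plan is to proceed by complete analogy with the proof of Lemma \ref{QuantityOfInt}, with the roles of the $u$- and $v$-coordinates interchanged and reachability from the bottom replaced by reachability from the top. In particular, instead of an induction on $j$ at fixed $i$, I would use induction on $i$ at fixed $j$, walking cell-by-cell through the row of cells $D(1,j), D(2,j), \dots, D(i,j)$.

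For the base case, I would take $i = 0$, where $R(0,j) = L(1,j)$ lies on the left boundary $L$ of the rectangle $D$. Combining Lemma \ref{MyMonotony} (monotonicity of $r^\uparrow$) with Lemma \ref{Constancy} (constancy of $r^\uparrow$ on horizontal edges), I would show that $g^\uparrow \cap L(1,j)$, when nonempty, consists of a single interval on which $r^\uparrow$ takes a single value in standard form, matching the bound $2\cdot 0 + 1 = 1$ and playing the role analogous to the trivial identity interval $r_\downarrow(u,0)=u$ on $T(i,0)=B(i,1)$ in the previous lemma.

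For the inductive step from $R(i^*-1, j)$ to $R(i^*, j)$, I would decompose $g^\uparrow \cap R(i^*, j)$ into three (possibly empty) parts---bottom, middle and top---based on the vertical extremes of $g^\uparrow \cap R(i^*-1, j) = g^\uparrow \cap L(i^*, j)$. Paths from $T$ that reach points in the bottom and top parts of $R(i^*, j)$ cannot do so through $L(i^*, j)$, and convexity of $D_\varepsilon(i^*, j)$ combined with Lemma \ref{Constancy} applied to $g^\uparrow \cap T(i^*, j)$ should let me conclude that $r^\uparrow$ is constant on each of these extreme parts, contributing one standard-form interval each. On the middle part, convexity of $D_\varepsilon(i^*, j)$ together with Lemma \ref{MyMonotony} should give $r^\uparrow(i^*, v) = r^\uparrow(i^*-1, v)$, so the standard form on the middle part is inherited from that on $R(i^*-1, j)$. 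The bottom and top parts contribute at most one new interval each, so the count grows by at most $2$ per step, yielding the overall bound $2(i^*-1) + 1 + 2 = 2i^*+1$.

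The main obstacle will be getting the directional analogy right: in Lemma \ref{QuantityOfInt} the paths reaching the left and right parts of $T(i, j^*)$ from $B$ were forced through the vertical edge $R(i-1, j^*)$, and one must verify the mirror claim that paths reaching the bottom and top parts of $R(i^*, j)$ from $T$ are forced through the horizontal edge $T(i^*, j)$. Unlike the upward-moving monotonic paths in the original argument, here the paths descend from $T$ toward the lower-left, so the case analysis of which edges of cell $D(i^*, j)$ such a path enters and exits needs to be redone carefully; once this is in place, the rest of the dualization is routine.
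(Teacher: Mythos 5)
The paper gives no proof of this lemma (it is explicitly ``left without proof''), so dualizing Lemma \ref{QuantityOfInt} is the right instinct; however, you have dualized in the wrong direction, and the obstacle you flag at the end is fatal to your setup rather than a routine check. Reachability from the top propagates from the upper right toward the lower left: a monotonic path joining a point $(i^*,v)\in R(i^*,j)$ to $T$ moves up and to the right, so it enters the cell $D(i^*+1,j)$ lying to the \emph{right} of that edge and leaves it through $T(i^*+1,j)$ or $R(i^*+1,j)$; it does not pass through $L(i^*,j)=R(i^*-1,j)$ or $T(i^*,j)$. Hence $g^\uparrow\cap R(i^*,j)$ and $r^\uparrow$ on it are determined by $g^\uparrow\cap R(i^*+1,j)$ and $g^\uparrow\cap T(i^*+1,j)$, and the induction must run in the direction of \emph{decreasing} $i$, with base case $i=2m$ on the right side of $D$, where the only point of $T$ reachable from any point of $R(2m,j)$ is the corner $(2m,n)$ and so $r^\uparrow\equiv 2m$ there (one interval). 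Your base case at $i=0$ is false: Lemma \ref{Constancy} gives constancy of $r^\uparrow$ only on horizontal edges $T(i,j)$, while on a vertical edge $r^\uparrow(i,\cdot)$ is merely non-increasing in $v$ by Lemma \ref{MyMonotony} and can be a step function with many steps (a point low on $L$ may slip rightward along a low corridor to far parts of $T$ that a point higher on $L$ cannot reach). Your middle-part identity $r^\uparrow(i^*,v)=r^\uparrow(i^*-1,v)$ likewise points the wrong way; the inherited relation is $r^\uparrow(i^*,v)=r^\uparrow(i^*+1,v)$.

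Once the induction is reversed, the three-part decomposition does go through exactly as in Lemma \ref{QuantityOfInt}: the top part of $g^\uparrow\cap R(i^*,j)$ (above the range of $g^\uparrow\cap R(i^*+1,j)$) is pinned by convexity of $D_\varepsilon(i^*+1,j)$ to the constant value of $r^\uparrow$ on $g^\uparrow\cap T(i^*+1,j)$, the bottom part is constant and equal to the value at the bottommost point of $g^\uparrow\cap R(i^*+1,j)$, and the middle part inherits the standard form, so at most two intervals are added per step. Note, though, that this argument yields the bound $2(2m-i)+1$, not $2i+1$: the interval count grows as the edge moves away from the right side of $D$, and at $i=0$ the stated bound of one interval cannot hold in general. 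The bound printed in the lemma appears to be a mis-indexed dualization; since Lemma \ref{ComplexityForwardAndBackward} only needs the $O(m)$ order and the ``at most two new intervals per step'' property, this is harmless downstream, but your proposal as written can establish neither the stated inequality nor a correct variant of it without reversing the entire induction.
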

According to Lemma \ref{Constancy} the set $g^\uparrow \cap B$ and the restriction of $r^\uparrow$ to this set can be expressed with subsets $g^\uparrow \cap T(i,0)$ 
and numbers $r_i^\uparrow$ for $i \in \{1,2, \dots , m \}$. 
The number $r_i^\uparrow$ is the value of $r^\uparrow$ on a subset $g^\uparrow \cap T(i,0)$.
The total amount of these data is of order $m$. 
According to Lemma \ref{QuantityOfInt} the set $g_\downarrow \cap T$ and the restriction of $r_\downarrow$ to this set can be expressed with
the sets $I(i,n)$ of intervals $int$ and with numbers $r_\downarrow^{int}$, where $int \in I(i,n)$, $i \in \{m+1,m+2, \dots , 2m \}$.
Numbers $r_\downarrow^{int}$ define the function $r_\downarrow$ on the interval $int$ in the following way. 
If $r_\downarrow^{int}$ is less than the right endpoint of the interval $int$ then $r_\downarrow(u,n)=r_\downarrow^{int}$ for all $(u,n) \in int$.  
Otherwise, $r_\downarrow(u,n)=u$ for all $(u,n) \in int$.
The total amount of these data is of order $m \times n$.
Denote 
$$C(X, Y) = \Big(\big\langle g^\uparrow \cap T(i,0), r_i^\uparrow, I(i+m,n), r_\downarrow^{int} \big\rangle \Big| i \in \{1,2, \dots , m \}, int \in I(i+m,n) \Big)$$
the data that are sufficient to test conditions (\ref{OurDecision}) for an $m$-gonal curve $X$ and an $n$-gonal curve $Y$.
Section \ref{SimilarityNew} shows that it takes $O(m n)$ time to test condition (\ref{OurDecision}) based on these data. 
Section \ref{InputData} shows how to obtain the data $C(X, Y)$ in $O(m n)$ time.     

\section{Testing the similarity of closed polygonal curves}\label{SimilarityNew}
\begin{lemma}\label{TestingOfMainInequality}
If for closed polygonal curves $X$ and $Y$ with $m$ and $n$ vertices the data $C(X, Y)$ are known then testing $\delta(X,Y) \le \varepsilon$ can be done in $O(m n)$ time.
\end{lemma}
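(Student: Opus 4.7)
The plan is to substitute $w = u + m$ in condition (\ref{OurDecision}), so that the search becomes: find $w \in [m, 2m]$ satisfying $(w - m, 0) \in g^\uparrow$, $(w, n) \in g_\downarrow$, $w \le r^\uparrow(w - m, 0)$, and $w - m \le r_\downarrow(w, n)$. The data $C(X, Y)$ partition the admissible $w$ in a natural way by the cells in which the two endpoints of a candidate fall: whenever $(w - m, 0) \in T(i, 0)$ one simultaneously has $(w, n) \in T(i + m, n)$, since the two horizontal coordinates differ by exactly $m$. I would therefore iterate over $i = 1, \dots, m$ and dispatch on the cell index.

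For a fixed $i$, the set $g^\uparrow \cap T(i, 0)$ is a single interval on which, by Lemma \ref{Constancy}, $r^\uparrow$ takes the constant value $r_i^\uparrow$. So the first and third constraints together confine $w$ to the interval
\[
A_i = \bigl\{\, w \mid w - m \in g^\uparrow \cap T(i, 0) \,\bigr\} \cap (-\infty, r_i^\uparrow],
\]
whose endpoints are read off in $O(1)$ from $C(X, Y)$. What remains is to find some $w \in A_i$ that also lies in $g_\downarrow \cap T(i + m, n) = \bigcup_{int \in I(i + m, n)} int$ and satisfies the fourth inequality. Here the standard form handles each interval $int$ individually: if $r_\downarrow^{int}$ is not less than the right endpoint of $int$, then $r_\downarrow(w, n) = w$ on $int$, the fourth inequality $w - m \le w$ is automatic, and a witness exists whenever $A_i \cap int \ne \emptyset$; otherwise $r_\downarrow(w, n) = r_\downarrow^{int}$ on $int$, the fourth inequality reduces to the uniform upper bound $w \le r_\downarrow^{int} + m$, and a witness exists whenever $A_i \cap int \cap (-\infty, r_\downarrow^{int} + m] \ne \emptyset$. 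Each of these two tests is a comparison among a constant number of endpoints.

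Each interval $int$ thus contributes an $O(1)$ probe, so the total running time is proportional to $\sum_{i=1}^{m} |I(i + m, n)|$, which is $O(mn)$ by Lemma \ref{QuantityOfInt}; besides a single linear scan of $C(X, Y)$ no auxiliary data structure is needed. I expect the main obstacle to be purely bookkeeping: verifying that the reformulation in terms of $w$ genuinely aligns the bottom partition $\{g^\uparrow \cap T(i, 0)\}_i$ with the top partition $\{I(i + m, n)\}_i$ so that no candidate $w$ is missed, and that the two branches of the standard form are dispatched correctly. Once that alignment is in place the argument is essentially mechanical.
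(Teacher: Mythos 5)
Your proposal is correct and follows essentially the same route as the paper: enumerate the pairs $(i,int)$ with $int \in I(i+m,n)$, reduce each to a constant-time comparison of interval endpoints (using the constancy of $r^\uparrow$ on $g^\uparrow \cap T(i,0)$ and the two branches of the standard form of $r_\downarrow$), and bound the number of pairs by $O(mn)$ via Lemma~\ref{QuantityOfInt}. The one point you defer as ``bookkeeping''---that a candidate landing on the right-open endpoint of an interval is recovered as the left endpoint of the adjacent interval---is precisely the detail the paper's proof spells out when it shows its relaxed conditions are consistent exactly when the exact ones are.
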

\begin{proof} According to Lemmas \ref{AltFirst} and \ref{OurSecond}, condition $\delta(X,Y) \le \varepsilon$ is equivalent to the existence of a number $u$, $0 \le u \le m$, 
that fulfills conditions
\begin{equation} \label{OurDecisionCopy}
(u,0) \in g^\uparrow, \quad (u+m,n) \in g_\downarrow, \quad u+m \le r^\uparrow(u,0),
\quad u \le r_\downarrow(u+m,n).  
\end{equation}
Since the data $C(X,Y)$ are known, testing condition (\ref{OurDecisionCopy}) is reduced to testing if there exists a triple 
$i \in \{ 1,2, \dots , m \}$, $int \in I(i+m,n)$, $u \in \{t|0 \le t \le m\}$ that fulfills conditions
\begin{equation} \label{OurDecisionConcreteNew }
(u,0) \in g^\uparrow \cap T(i,0), \quad (u+m,n) \in int, \quad u+m \le r_i^\uparrow,
\quad u \le r_\downarrow(u+m,n).  
\end{equation}
Let us replace the condition $u \le r_\downarrow(u+m,n)$ in (\ref{OurDecisionConcreteNew }) with a condition $u \le r_\downarrow^{int}$
and express (\ref{OurDecisionConcreteNew }) in a form
\begin{equation} \label{OurDecisionAuxiliary}
(u,0) \in g^\uparrow \cap T(i,0), \quad (u+m,n) \in int, \quad u+m \le r_i^\uparrow,
\quad u \le r_\downarrow^{int}.
\end{equation}
Conditions (\ref{OurDecisionConcreteNew }) and (\ref{OurDecisionAuxiliary}) are equivalent.
Indeed, if the function $r_\downarrow$ is constant on $int$ then $r_\downarrow(u+m,n) = r_\downarrow^{int}$ for all $(u,n) \in int$. 
If $r_\downarrow(u,n)=u$ on $int$ then condition $u \le r_\downarrow(u+m,n)$ becomes an inequality $u \le u+m$ that is valid for all $u$. 
In this case, the value $r_\downarrow^{int}$ equals to a horizontal coordinate of interval's $int$ right endpoint. 
Therefore, the inequality $u \le r_\downarrow^{int}$ is also valid for any point $(u,n) \in int$.

Denote $a^{int}$ and $b^{int}$ the left and right endpoints of the interval $int$.
Denote  $c_{i}$ and  $d_{i}$ the horizontal coordinate of the leftmost and the rightmost points of $g^\uparrow \cap T(i,0)$ respectively. 
Using this notation, condition $(u,0) \in g^\uparrow \cap T(i,0)$ in (\ref{OurDecisionAuxiliary}) becomes $c_{i} \le u \le d_{i}$. 
Condition $(u+m,n) \in int$ becomes $a^{int}-m \le u \le b^{int}-m$ when $int$ is the rightmost interval in $I(i+m,n)$ and becomes $a^{int}-m \le u < b^{int}-m$ otherwise. 
Therefore, for some triples the constraint
\begin{equation} \label{OurDecisionWeekConcrete2 }
c_{i} \le u \le d_{i}, \quad a^{int}-m \le u \le b^{int}-m, \quad u \le r_{i}^\uparrow - m,
\quad u \le r_\downarrow^{int}
\end{equation}
is weaker than (\ref{OurDecisionAuxiliary}). Nevertheless, if a triple
$(i, int, u)$ that fulfills (\ref{OurDecisionWeekConcrete2 }) exists then the  triple that fulfills (\ref{OurDecisionAuxiliary}) also exists.
Let $(i^*, int^*, u^*)$ be a triple that fulfills (\ref{OurDecisionWeekConcrete2 }) and does not fulfill (\ref{OurDecisionAuxiliary}).
In other words, $a^{int^*}-m \le u^* \le b^{int^*}-m$ and $ (u+m,n) \notin int^*$. 
It is only possible when $u^* = b^{int^*}-m$ and $int^*$  is not the rightmost interval from $I(i^*+m,n)$. 
It follows that an interval $int^+ \in I(i^*+m,n)$ exists such that $a^{int^+} =~b^{int^*}$. 
The triple $(i^*, int^+, u^*)$ fulfills conditions (\ref{OurDecisionAuxiliary}). 
Therefore, the consistency of conditions (\ref{OurDecisionWeekConcrete2 }) is equivalent to the consistency of conditions (\ref{OurDecisionAuxiliary}), although constraints (\ref{OurDecisionWeekConcrete2 }) are weaker than (\ref{OurDecisionAuxiliary}). 
In turn, consistency of (\ref{OurDecisionWeekConcrete2 }) is equivalent to the existence of a pair
$i \in \{ 1,2, \dots , m \}$, $int \in I(i+m,n)$,  that fulfills the inequality
$$\max \{a_i, \quad a^{int}-m  \} \le 
\min \{ d_i,\quad b^{int}-m,\quad r_i^\uparrow - m,\quad r_\downarrow^{int} \}.$$ 
Testing this inequality for any pair $(i,int)$ can be performed in constant time. 
According to Lemma \ref{QuantityOfInt} the number of intervals tested for each $i$ does not exceed $(2n+1)$. Consequently, the number of tested pairs $(i,int)$ is $O(m n)$. 
\end{proof}

\section{Obtaining the data $C(X, Y)$}\label{InputData}
\subsection{The general scheme}
Let the intervals $D_\varepsilon  \cap T(i,j)$, $0 \le i \le 2m$,  $1 \le j \le n$, and $D_\varepsilon  \cap R(i,j) $, $1 \le i \le 2m$,  $0 \le j \le n$,
be built for closed polygonal curves $X$ and $Y$.
Based on these data it is necessary to build $C(X,Y)$, which is used to test curve similarity as described in Section \ref{SimilarityNew}. 
The data $C(X,Y)$ consist of sets $g_\downarrow \cap T(i,n)$, $m+1 \le i \le 2m$, restrictions of $r_\downarrow$ to these sets, 
sets $g^\uparrow \cap T(i,0)$, $ 1 \le  i \le m $, and values of function $r^\uparrow$ on these sets. 
These sets and functions are built in two independent stages that we call forward and backward passes.
Both passes consist of $2mn$ steps, pair $(i,j)$ being the number of the step.

A forward pass performs a step number $(i,j)$ after steps $(i-1,j)$ and $(i,j-1)$.
On a step number $(i,j)$ the subsets $g_\downarrow \cap R(i,j) $, $g_\downarrow \cap T(i,j)$ 
and restrictions of $r_\downarrow $ to these subsets are built.
These data are obtained based on $g_\downarrow \cap R(i-1,j) $, 
$g_\downarrow \cap T(i,j-1) $ and restrictions of $r_\downarrow $ to these sets, which had been built on previous steps.
The result of all $2mn$ steps of a forward pass are the sets $g_\downarrow \cap T(i,n)$, $m+1 \le i \le 2m $, 
and restrictions of $r_\downarrow$ to these sets. 
The proof of Lemma \ref{QuantityOfInt} practically describes the idea of an algorithm for the forward pass. 
Nevertheless, Subsection \ref{Forward} describes it in more detail.

The result of the forward pass is only one part of the data needed to test (\ref{OurDecisionConcreteNew }). 
Another part is the result of the backward pass.
During the backward pass a step number $(i,j)$ is performed after steps number $(i+1,j)$ and $(i,j+1)$. 
A step number $(i,j)$ builds subsets $g^\uparrow \cap L(i,j) $, $g^\uparrow \cap B(i,j)$ 
and restrictions $r^\uparrow $ to these subsets based on subsets $g^\uparrow \cap L(i+1,j) $, $g^\uparrow \cap B(i,j+1) $ and restrictions of $r^\uparrow $ to these subsets. 
The result of the backward pass are subsets $g^\uparrow \cap T(i,0)$ and values of $r^\uparrow$ on these subsets.
The backward pass uses the same approach as the forward pass and therefore is not presented in detail by this paper.
   
\subsection{The forward pass}\label{Forward}
We define a data structure to store any function $f$ that can be expressed in a standard form on a set $I$ of intervals $int$. 
The function $f$ on each of these intervals $int \in I$ is defined by a triple $(beg, val, end )$, numbers $beg$ and $end$ being the endpoints of the interval $int$. 
If $val < end$ then $f(x)=val$ for all $x \in int$, otherwise if $val = end$ then $f(x)=x$ for all $x \in int$. 
The triples $(beg, val, end )$ are stored in a double-ended queue (deque) $Q$ with the following operations performing in constant time:\\
-- testing whether the deque is empty;\\
-- reading and removing either the leftmost or the rightmost triple;\\
-- inserting a triple either to the left or to the right end of the deque.\\
For any given number $x$ we define additional operations of cutting the deque to the left of $x$ and cutting it to the right of $x$.
When the deque is being cut to the left of $x$ all such triples $(beg, val, end)$ that $end < x$ are removed from the left end of the deque. 
Then if the triple $(beg,val,end)$ on the left end is such that $beg \le x < end$, it gets replaced with a triple $(x,val,end)$.
When the deque is being cut to the right of $x$ all such triples $(beg, val, end)$ that $beg > x$ are removed from the right.
Then the triple $(beg,val,end)$ on the right end with $beg \le x \le end$ is replaced either by a triple $(beg,val,x)$ if $val < end$, or by a triple $(beg,x,x)$ if $val = end$. 
Since the triples in the deque are sorted, the time spent on cutting the deque is proportional to the number of triples removed from the deque.

The input data for the forward pass are \\
\phantom{} \quad $D_\varepsilon \cap T(i,j)$, $1 \le i \le 2m$, $0 \le j \le n$,\quad and \quad $D_\varepsilon \cap R(i,j)$, $0 \le i \le 2m$, $1 \le j \le n$.\\
The algorithm works with $2m$ deques $Q(i)$,  $i \in \{1, 2, \dots , 2m \}$.
Each $(i,j)$-th step starts with a deque $Q(i)$ that represents the set $g_\downarrow \cap T(i,j-1)$ and
the restriction of $r_\downarrow$ to this set
and updates it to represent the set $g_\downarrow \cap T(i,j)$ and
the restriction of $r_\downarrow$ to this set.
Thus, after the algorithm finishes the deque $Q(i)$ represents the set $g_\downarrow \cap T(i,n)$ and the restriction of $r_\downarrow$ to this set.
In order to perform this update on each step $(i,j)$ the auxiliary data in the form of the subset $g_\downarrow \cap R(i,j-1)$
and the value of $r_\downarrow$ on this subset is needed.
On a step number $(i, j)$ the subset $g_\downarrow \cap R(i,j)$ and the value of $r_\downarrow$ on this subset is calculated
and the deque $Q(i)$ is updated.

The algorithm consists of an initialization and $2mn$ steps.
The initialization sets the initial states of the deques and the initial values of the auxiliary data according to the following rules.

If $(0,0) \in  D_\varepsilon$ then $g_\downarrow \cap R(0,1)=D_\varepsilon \cap R(0,1)$ and the function $r_\downarrow $ takes value $0$ on this set. 
Otherwise $g_\downarrow \cap R(0,1) = \emptyset$. 

If $(0,j-1) \in g_\downarrow$ then $g_\downarrow \cap R(0,j)=D_\varepsilon \cap R(0,j)$ and the function
$r_\downarrow $ takes value $0$ on this set. 
Otherwise $g_\downarrow \cap R(0,j) = \emptyset$. 

The set $g_\downarrow \cap T(i,0)$, $1 \le i \le 2m$, equals $D_\varepsilon \cap T(i,0)$.  
If $D_\varepsilon \cap T(i,0) \ne \emptyset $ then the function $r_\downarrow $ takes value $r_\downarrow (u,0) = u$ on this set. 
In this case the deque $Q(i)$ has a single triple $(beg, val, end)$, number $beg$ being the horizontal coordinate of the leftmost point of $D_\varepsilon \cap T(i,0)$
and $val=end$ being the coordinate the rightmost point. 
If $D_\varepsilon \cap T(i,0) = \emptyset $ then the deque $Q(i)$ is initially empty. 
It takes $O(m+n)$ time to perform the initialization. 

At the beginning of a step number $(i,j)$ the following data are known.
The set $g_\downarrow \cap R(i-1,j)$ and the value of function $r_\downarrow$ on this set that we denote $r_\downarrow^*$.
The subset $g_\downarrow \cap T(i,j-1)$ and the restriction of $r_\downarrow$ to this subset that are represented by a deque $Q(i)$.
These data are illustrated on Figure~\ref{Fig9}. 
The picture also shows the interval $D_\varepsilon \cap T(i,j)$, which is given as input to the forward pass, 
numbers $a$ and $b$ being the left and right endpoints of this interval, $c$ and $d$ being the endpoints of the interval $g_\downarrow \cap T(i,j-1)$.
On each step $(i,j)$ the algorithm computes the set $g_\downarrow \cap R(i,j)$ and the value of $r_\downarrow$ on this set and updates the deque $Q(i)$ to
represent the subset $g_\downarrow \cap T(i,j)$ and the restriction of $r_\downarrow$ to this subset.
\begin{figure}
  \centering
  \includegraphics{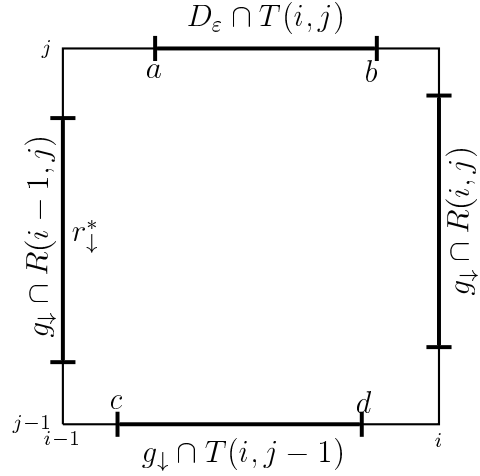}
  \caption{The data processed on $(i, j)$-th step}
  \label{Fig9}
\end{figure}

Computing the set $g_\downarrow \cap R(i,j)$ and the value of $r_\downarrow$ on this set can be done by a rather simple rule in constant time,
which is not described here.
Updating the deque $Q(i)$ is a bit more complex but still consists of several rather straightforward rules.
\\
1. If $D_\varepsilon \cap T(i,j) = \emptyset$ then remove all triples from $Q(i)$.\\
2. If $g_\downarrow \cap R(i-1,j)= \emptyset$ and $g_\downarrow \cap T(i,j-1) = \emptyset$ then the  deque $Q(i)$ is empty and it should remain empty.\\
3. If $D_\varepsilon \cap T(i,j) \ne \emptyset$, $g_\downarrow \cap R(i-1,j)\ne \emptyset$ and
$g_\downarrow \cap T(i,j-1) = \emptyset$ then the deque $Q(i)$ was empty at the beginning of the step $(i ,j)$. 
Since $D_\varepsilon \cap D(i,j)$ is convex, any point of $D_\varepsilon \cap T(i,j)$ is reachable from any point of  $g_\downarrow \cap R(i-1,j)$.
Therefore, a single triple $(a,r_\downarrow^* , b)$ is inserted into the empty deque $Q(i)$.

If $D_\varepsilon \cap T(i,j) \ne \emptyset$ and $g_\downarrow \cap T(i,j-1) \ne \emptyset$ then the update of $Q(i)$ 
depends on the relative position of intervals $D_\varepsilon \cap T(i,j)$ and $g_\downarrow \cap T(i,j-1)$. 
The different cases of relative positions are shown on Figure~\ref{Fig11}.\\
4. Case a): exclude all triples from $Q(i)$; if $g_\downarrow \cap R(i-1,j)\ne \emptyset$ then insert the triple $(a,r_\downarrow^* , b)$
is inserted into the empty deque $Q(i)$.\\ 
5. Case b): cut the deque to the right of $b$; if $g_\downarrow \cap R(i-1,j)\ne \emptyset$ then insert the triple $(a,r_\downarrow^* , c)$ to the left end of $Q(i)$.\\ 
6. Case c): if $g_\downarrow \cap R(i-1,j)\ne \emptyset$ then insert $(a,r_\downarrow^* , c)$ to the left end of $Q(i)$; 
    insert $(d, r_\downarrow(d,0), b)$ to the right end of the deque $Q(i)$. \\
7. Case d): cut the deque to the left of $a$; cut the deque to the right of $b$. \\
8. Case e): cut the deque to the left of $a$; insert $(d, r_\downarrow(d,0), b)$ to the right end of the deque $Q(i)$. \\
9. Case f): remove all triples from $Q(i)$; insert a single triple $(a, r_\downarrow(d,0), b)$ to $Q(i)$.
\begin{figure}
  \centering
  \includegraphics{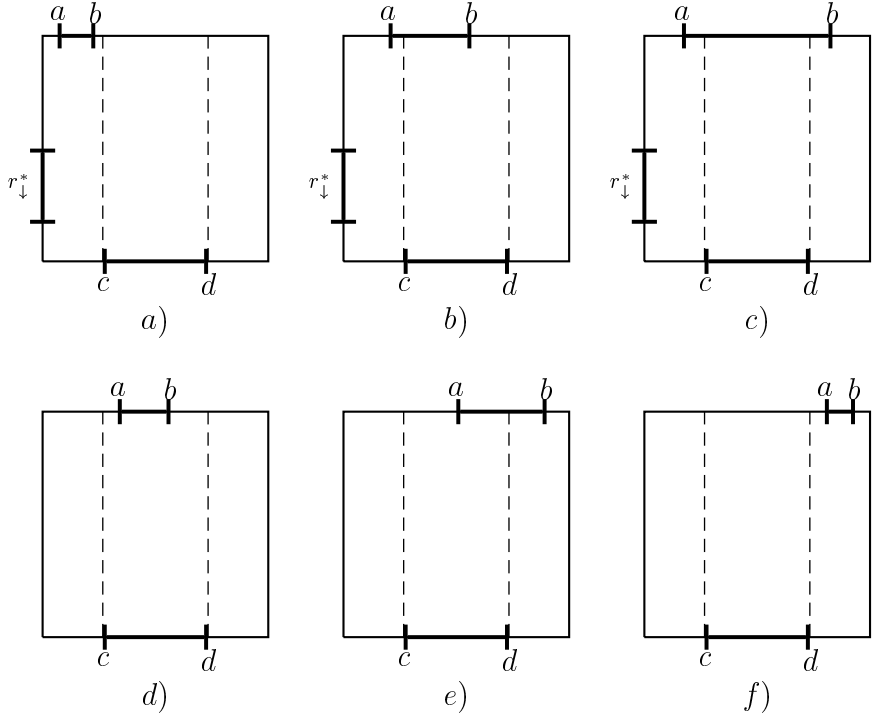}
  \caption{Relative position of intervals $D_\varepsilon \cap T(i,j)$ and $g_\downarrow \cap T(i,j-1)$}
  \label{Fig11}
\end{figure}  
\begin{lemma}\label{ComplexityForwardAndBackward}
It takes $O(m n)$ time to complete both the forward and the backward passes of the algorithm.
\end{lemma}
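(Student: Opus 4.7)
The plan is to bound the forward pass by an amortized analysis of the $2m$ deques $Q(i)$ and then to invoke symmetry for the backward pass. The total running time breaks into initialization, per-step bookkeeping, and the cumulative cost of the two cutting operations on the deques; I will argue that each of these three parts is $O(mn)$.

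Initialization is $O(m+n)$ by direct inspection: every boundary cell is touched exactly once. Per-step bookkeeping is $O(1)$, since on step $(i,j)$ the algorithm reads the endpoints of the three relevant intervals $D_\varepsilon \cap T(i,j)$, $D_\varepsilon \cap R(i,j)$ and $g_\downarrow \cap T(i,j-1)$, selects which of the nine cases applies, computes $g_\downarrow \cap R(i,j)$ together with the updated auxiliary value $r_\downarrow^*$, and inserts at most two new triples at the ends of $Q(i)$ (this bound of two per step is precisely the content of Lemma \ref{QuantityOfInt}). Summed over the $2mn$ steps, this contributes $O(mn)$ to the running time and, crucially, adds at most $4mn$ triples to the deques overall.

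The nontrivial step is accounting for the cost of the cut-to-left and cut-to-right operations. Each such operation removes some number $k$ of triples from one end of a deque in time $O(k+1)$; the single endpoint-truncation that replaces a remaining triple in place can be absorbed into the $O(1)$ per-call overhead. The argument I would use is the classical credit scheme: deposit one credit on each triple when it is inserted and pay for every subsequent removal with that credit. Since initialization produces at most $2m$ triples and the step loop produces at most $4mn$ more, the total number of removals cannot exceed $2m+4mn$, so the aggregate cutting cost is $O(mn)$. Adding the three contributions yields $O(mn)$ for the forward pass.

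The backward pass is governed by the same kind of update rules, with top and right replaced by bottom and left and the iteration order reversed, acting on $2m$ analogous deques, so the identical amortized argument gives $O(mn)$ for it as well. The main obstacle I expect is the pedantic case-by-case verification that every one of rules 1--9 really does insert at most two new triples and that every other modification performed during a cut is genuinely constant-time; once that bookkeeping is in place the amortized bound is immediate.
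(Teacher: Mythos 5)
Your proof is correct and follows essentially the same route as the paper's: both bound the cost of the cutting operations by observing that the total number of triples removed cannot exceed the total number inserted (at most two per step plus the initialization), giving $O(mn)$ overall --- your credit scheme is just the amortized phrasing of the paper's direct counting argument. The treatment of initialization, the constant per-step work for $g_\downarrow \cap R(i,j)$, and the appeal to symmetry for the backward pass also match the paper.
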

\begin{proof}
We prove this lemma for the forward pass. The proof for the backward pass is similar.\\
No more than two triples are inserted into the deques on each step. Therefore, no more than $4mn$ triples are inserted during all steps $(i,j)$, $1 \le i \le 2m$, $1 \le j \le n$.
The number of triples removed from the deques does not exceed the number of triples inserted into the deques, and therefore is not greater than $4mn$.
Every time some triple is read from the deques it is also removed. Therefore, the number of times the triples are read is also not greater than $4mn$.

Therefore, the forward pass includes the initialization, which takes $O(m+n)$ time, and the work with deques, which takes $O(m n)$ time.
In addition, the forward pass also computes the set $g_\downarrow \cap R(i,j)$ and the value of $r_\downarrow$ on this set on each step.
It takes constant time on each step to do this. Therefore, the total amount of time spent on these computations is $O(m n)$.
\end{proof}              
\section{The result}\label{ResultSection}
\begin{theorem} 
Let $X$ and $Y$ be closed polygonal curves with $m$ and $n$ vertices and $\delta(X,Y)$ be the Frechet distance between them. 
Testing the inequality $\delta(X,Y) \le \varepsilon$ takes $O(m n)$ time.
\end{theorem}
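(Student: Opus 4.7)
The plan is to assemble the theorem as a direct consequence of the lemmas established in Sections~\ref{Achievability}--\ref{InputData}, together with a cheap preprocessing step that prepares the input to the forward and backward passes. The overall algorithm has three phases whose costs add up: (i) precompute the cell-boundary intersections $D_\varepsilon \cap T(i,j)$ and $D_\varepsilon \cap R(i,j)$ on the free-space diagram $D$; (ii) run the forward and backward passes of Section~\ref{InputData} to produce the structure $C(X,Y)$; (iii) run the test of Section~\ref{SimilarityNew} on $C(X,Y)$ to decide whether $\delta(X,Y)\le\varepsilon$.

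For phase~(i), each of the $O(mn)$ cells $D(i,j)$ of $D$ corresponds to a pair of edges, one from $X$ and one from $Y$, and the set $D_\varepsilon(i,j)$ is the intersection of an ellipse with the unit square. Its intersections with a horizontal or vertical side reduce to solving one quadratic equation in one variable, which under the model of computation fixed in the introduction (arithmetic and square roots in constant time) takes $O(1)$. Summing over all cells, this phase costs $O(mn)$. For phase~(ii), I would invoke Lemma~\ref{ComplexityForwardAndBackward} verbatim: the forward pass builds the sets $g_\downarrow\cap T(i,n)$ and the restriction of $r_\downarrow$ to them, the backward pass builds the sets $g^\uparrow\cap T(i,0)$ and the values $r_i^\uparrow$, and together this is exactly the data $C(X,Y)$ needed in phase~(iii); both passes finish in $O(mn)$ time. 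For phase~(iii), Lemma~\ref{TestingOfMainInequality} gives directly that, once $C(X,Y)$ is available, testing $\delta(X,Y)\le\varepsilon$ takes $O(mn)$ time.

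Adding the three bounds, the total running time is $O(mn)+O(mn)+O(mn)=O(mn)$, which is exactly the claim of the theorem. There is no real obstacle to overcome here, since all the substantive work—the monotonicity of $r_\downarrow$ and $r^\uparrow$, the standard-form representation with at most $(2j+1)$ or $(2i+1)$ intervals per column or row, and the amortized bookkeeping on the deques $Q(i)$—is already discharged by Lemmas~\ref{MyMonotony}, \ref{Constancy}, \ref{QuantityOfInt}, \ref{TestingOfMainInequality} and \ref{ComplexityForwardAndBackward}. The only point worth checking explicitly is that the preprocessing phase~(i) really is $O(mn)$ rather than $O(mn\log mn)$, but this is immediate from the constant-time model assumed throughout. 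Hence the theorem follows by chaining the three phases.
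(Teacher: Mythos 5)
Your proposal is correct and follows the paper's own proof essentially verbatim: the same three-phase decomposition (computing the sets $D_\varepsilon \cap T(i,j)$ and $D_\varepsilon \cap R(i,j)$ by solving $O(mn)$ quadratic equations, running the forward and backward passes via Lemma~\ref{ComplexityForwardAndBackward}, then applying Lemma~\ref{TestingOfMainInequality}), each phase costing $O(mn)$. Nothing further is needed.
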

\begin{proof}
Testing the inequality $\delta(X,Y) \le \varepsilon$ is reduced to the following computations.
\\
For the given polygonal curves $X$ and $Y$ the sets $D_\varepsilon \cap T(i,j)$, $1 \le i \le 2m$, $0 \le j \le n$,  and $D_\varepsilon \cap R(i,j)$, $0 \le i \le 2m$, $1 \le j \le n$, are computed.
This is equivalent to solving $2mn$ quadratic equations.
\\
Then based on the sets $D_\varepsilon \cap T(i,j)$ and $D_\varepsilon \cap R(i,j)$ the subsets $g_\downarrow \cap T(i,n)$, $m+1 \le i \le 2m$, 
and restrictions of $r_\downarrow$ to these subsets are computed along with the subsets $g^\uparrow \cap T(i,0)$, $ 1 \le  i \le m $, and the values of $r^\uparrow$ on these subsets. 
According to Lemma \ref{ComplexityForwardAndBackward} it takes $O(m n)$ time to build these.
\\
Finally, testing the inequality $\delta(X,Y) \le \varepsilon$ based on these data can be done in $O(m n)$ time according to Lemma \ref{TestingOfMainInequality}.
\end{proof}


\begin{thebibliography}{1}

\bibitem{frechet}
Helmut Alt and Michael Godau. 
\newblock Computing the fr{\'e}chet distance between two polygonal curves.
\newblock {\em Int. J. Comput. Geometry Appl.}, 5:75--91, 1995.
\bibitem{Chen02}
Jingying Chen, Maylor K.~H. Leung, and Yongsheng Gao.
\newblock Noisy logo recognition using line segment hausdorff distance.
\newblock {\em Pattern Recognition}, 36(4):943--955, 2003.


\bibitem{rockafellar2011variational}
R.T. Rockafellar, R.J.B. Wets, and M.~Wets. 
\newblock {\em Variational Analysis}.
\newblock Grundlehren Der Mathematischen Wissenschaften. Springer, 2011.

\bibitem{ismm2011}
Pierre Soille, Martino Pesaresi, and Georgios~K. Ouzounis, editors.
\newblock {\em Mathematical Morphology and Its Applications to Image and Signal
  Processing - 10th International Symposium, ISMM 2011, Verbania-Intra, Italy,
  July 6-8, 2011. Proceedings}, volume 6671 of {\em Lecture Notes in Computer
  Science}. Springer, 2011.

\bibitem{ICDAR.2007.121}
E.~Sriraghavendra, Karthik K., and C.~Bhattacharyya.
\newblock Frechet distance based approach for searching online handwritten
  documents.
\newblock {\em Document Analysis and Recognition, International Conference on},
  1:461--465, 2007.

\end{thebibliography}
\end{document}